\newtheorem{theorem}{Theorem}[section]
\newtheorem{lemma}[theorem]{Lemma}
\newtheorem{proposition}{Proposition}
\theoremstyle{definition}
\newtheorem{remark}{Remark}
\newcommand{\bA}{\mathbf{A}}
\newcommand{\bB}{\mathbf{B}}
\newcommand{\bC}{\mathbf{C}}
\newcommand{\bD}{\mathbf{D}}
\newcommand{\bI}{\mathbf{I}}
\newcommand{\bK}{\mathbf{K}}
\newcommand{\bM}{\mathbf{M}}
\newcommand{\bN}{\mathbf{N}}
\newcommand{\bQ}{\mathbf{Q}}
\newcommand{\bR}{\mathbf{R}}
\newcommand{\bS}{\mathbf{S}}
\newcommand{\bU}{\mathbf{U}}
\newcommand{\bW}{\mathbf{W}}
\newcommand{\bX}{\mathbf{X}}
\newcommand{\ba}{\mathbf{a}}
\newcommand{\be}{\mathbf{e}}
\newcommand{\bbf}{\mathbf{f}}
\newcommand{\bg}{\mathbf{g}}
\newcommand{\bn}{\mathbf{n}}
\newcommand{\bu}{\mathbf{u}}
\newcommand{\bv}{\mathbf{v}}
\newcommand{\bx}{\mathbf{x}}
\newcommand{\bzero}{\mathbf{0}}
\newcommand{\bSigma}{\mathbf{\Sigma}}
\newcommand{\real}{\mathbb{R}}
\newcommand{\complex}{\mathbb{C}}
\newcommand{\Lapl}{\mathcal{L}}
\newcommand{\btK}{\widetilde{\mathbf{K}}}
\newcommand{\btC}{\widetilde{\mathbf{C}}}
\newcommand{\btf}{\widetilde{\mathbf{f}}}
\newcommand{\btu}{\widetilde{\mathbf{u}}}
\newcommand{\diag}{\mathrm{diag}\,}
\newcommand{\norm}[1]{\left\|{#1}\right\|}
\renewcommand\Re{\operatorname{Re}}
\renewcommand\Im{\operatorname{Im}}
\begin{document}

\title[Rayleigh Damped Networks]{Characterization and synthesis of\\
Rayleigh Damped Elastodynamic Networks}
\author[A. Gondolo \and F. Guevara Vasquez]{}
\subjclass{Primary: 74B05, 35R02.}
\keywords{Elastodynamic networks, Response function, Damping, Network synthesis, Proportional damping}

\email{agondolo@math.utah.edu}
\email{fguevara@math.utah.edu}

%\mbox{}\\
\centerline{\scshape Alessandro Gondolo and Fernando Guevara Vasquez}
\medskip
{\footnotesize
 \centerline{Mathematics Department, University of Utah}
   \centerline{155 S 1400 E RM 233}
   \centerline{Salt Lake City, UT 84112-0090, USA}
}

\bigskip

% The name of the associate editor will be entered by an editorial staff
% "Communicated by the associate editor name" is not needed for special issue.
% \centerline{(Communicated by the associate editor name)}

\begin{abstract}
We consider damped elastodynamic networks where the damping matrix is assumed to be a non-negative linear combination of the stiffness and mass matrices (also known as Rayleigh or proportional damping). We give here a characterization of the frequency response of such networks. We also answer the synthesis question for such networks, i.e., how to construct a Rayleigh damped elastodynamic network with a given frequency response. Our analysis shows that not all damped elastodynamic networks can be realized when the proportionality constants between the damping matrix and the mass and stiffness matrices are fixed.
\end{abstract}

\maketitle

%%%%%%%%%%%%%%%%%%%%%%%%%%%%%%%%%%%%%%%%%%%%%%%%%%%%%%%%%%%%%%%%%%%%%%%%
\section{Introduction} \label{sec:intro}

The second Newton's law applied to a network of springs, masses, and dampers gives
\begin{equation}
 \bK \bu + \bC\dot{\bu} + \bM \ddot{\bu} = \bbf,
 \label{eq:2ndnewt}
\end{equation}
where $\bu$ is the vector of displacements of the network nodes and $\bbf$ is the vector of external forces. The stiffness, damping, and mass matrices are $\bK$, $\bC$, and $\bM$, respectively. These matrices are defined in section~\ref{sec:fr}. We are mainly concerned with networks where the damping is {\em proportional} or of {\em Rayleigh type}, which means that there are some known proportionality constants $\alpha,\beta \geq 0$ linking the damping matrix to the stiffness and mass matrices:
\begin{equation}
 \bC = \alpha \bK  + \beta \bM.
\label{eq:rdamp}
\end{equation}

The Rayleigh damping assumption means that:  
 \begin{enumerate}[(a)]
  \item The springs have a dashpot in parallel with damping constant proportional to the spring constant. One way of achieving this is to construct springs from Kelvin-Voigt solids (see e.g., \cite[\S 7.3]{Chung:2007:GCM}), where the damping constant is proportional to the stiffness constant. 
  \item Each mass lies in a fixed cavity filled with a viscous liquid, where the viscosity constant is proportional to the mass, or the cavity size is adjusted to obtain the same effect (i.e., inversely proportional to the mass). 
 \end{enumerate} 
 The proportionality constants $\alpha$ (linking the damping to the stiffness) and $\beta$ (linking the damping to the mass) are assumed to be the same for all the springs and nodes in the network. An example of such a network is given in figure~\ref{fig:net}.

\begin{figure}
 \begin{center}
  \includegraphics[width=.55\textwidth]{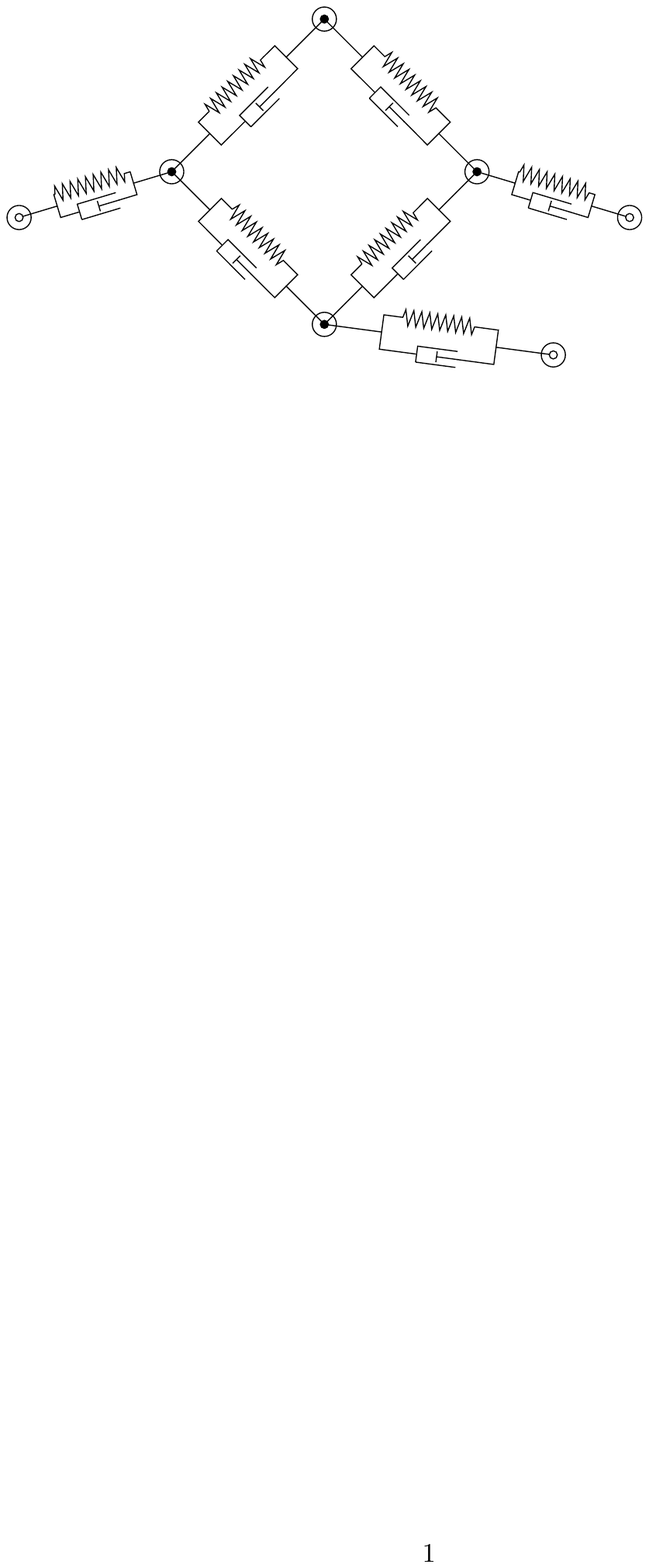}
 \end{center}
 \caption{An example of a Rayleigh damped network. For the terminal nodes the masses are colored in white and for the internal nodes in black. Each mass is surrounded by a cavity containing a viscous liquid. The damping coefficient of each of the linear dampers is $\alpha$ times the stiffness constant of the corresponding spring, and the viscous damping coefficient of each mass is $\beta$ times the corresponding mass.}
 \label{fig:net}
\end{figure}

We answer two questions about the (frequency) response of networks with this class of damping. The response is the frequency-dependent linear relationship between the displacements and forces at a few  terminal, or accessible, nodes. The first question we answer is the characterization question, i.e., we give the form of all possible responses for this particular class of networks. The second question is the synthesis: can we build a network from this class that mimics any admissible response?

For static networks (zero frequency), the characterization and synthesis questions were established by \citet{Camar:2003:DCS}, as part of a characterization of the possible macroscopic behaviors of a static elastic material under a single displacement field. Then, \citet{Milton:2008:RRM} characterized the response of damped elastic networks at a single, non-resonant frequency, and found it is possible to build a damped elastic network that mimics a prescribed response at {\em one single frequency}. The complete characterization and synthesis for the undamped case is done in \cite{Guevara:2011:CCS}: it is shown that the response is a matrix with rational function (of frequency) entries, and that given the response for an undamped elastic network, it is possible to build a network that mimics this response for {\em all non-resonant frequencies}. The new characterization and synthesis results that we present here are a generalization of the results in \cite{Guevara:2011:CCS} to damping of Rayleigh type. In particular, our results show that the Rayleigh damping model is incomplete, in the sense that it cannot describe, by itself, the responses of all possible elastodynamic networks with damping.

The characterization and synthesis for elastic networks have been used by
\citet{Camar:2003:DCS} to show how to design a (linear) elastic material that
has a prescribed response under a single displacement field. A similar
technique exploiting the characterization of networks of resistors was used by
\citet{Camar:2002:CSD} to show how to design a conductor that has a prescribed
response under a single voltage field. Synthesis results are also known for
electrical networks with resistors (Kirchhoff's $Y-\Delta$ theorem,
\citet*{Curtis:1998:CPG} for planar resistor networks); networks of resistors,
capacitors and inductors (Foster \cite{Foster:1924:RT,Foster:1924:TRD}, 
\citet{Bott:1949:ISU}, \citet{Milton:2008:RRM}); acoustic networks (\citet{Milton:2008:RRM}); and an
electromagnetic version of elastodynamic networks (Milton and Seppecher
\cite{Milton:2010:EC,Milton:2010:HEC}).

We start in section~\ref{sec:fr} by defining the stiffness, mass, and damping matrices for a damped elastodynamic network and, given only access to a few terminal nodes, its frequency response. Then in section~\ref{sec:massless}, we show how the characterization and synthesis results for static elastic networks in \cite{Camar:2003:DCS,Guevara:2011:CCS} can be generalized to massless elastic networks with Rayleigh damping. For general Rayleigh damped networks, the characterization result appears in section~\ref{sec:char} and the synthesis result in section~\ref{sec:synth}. We finish by giving in section~\ref{sec:resonances} the loci of the resonances of Rayleigh damped network.

%%%%%%%%%%%%%%%%%%%%%%%%%%%%%%%%%%%%%%%%%%%%%%%%%%%%%%%%%%%%%%%%%%%%%%%%
\section{The frequency response of an elastodynamic network with damping}
\label{sec:fr}

The linearized Hooke's Law for a single spring located between nodes $\bx_1$ and $\bx_2$ relates the forces $\ba_i$ (supported at node $\bx_i$) to the displacements $\bu_i$ (assuming these are small) through the relation
\[
 \ba_2 = - k_{1,2} \frac{(\bx_2 - \bx_1)(\bx_2 - \bx_1)^T}{\norm{\bx_2 -
 \bx_1}^2}
 (\bu_2 - \bu_1) = -\ba_1.
\]
Let $\bK$ be the stiffness matrix, $\bC$ be the damping matrix, and $\bM$ be a diagonal matrix with the masses of the nodes in the diagonal. In the case of a single spring, the stiffness and mass matrices are
\[
 \begin{aligned}
 &\bK = k_{1,2} \begin{bmatrix} \bn_{1,2} \bn_{1,2}^T & -\bn_{1,2}
 \bn_{1,2}^T\\ -\bn_{1,2} \bn_{1,2}^T & \bn_{1,2} \bn_{1,2}^T
 \end{bmatrix},
 \quad
 \bM = \diag(m_1\be,m_2\be),\\
 &\bn_{1,2} = \frac{\bx_2 - \bx_1}{\norm{\bx_2 - \bx_1}}, 
 \end{aligned}
\]
where $\be = (1,\ldots,1)^T \in \real^d$, $d=2,3$. For a network with $N$ nodes, the mass matrix is an $Nd \times Nd$ diagonal matrix that is defined similarly with the nodal masses $m_1,\ldots,m_N$. The stiffness and damping matrices of a network are $Nd \times Nd$ matrices obtained by adding the contributions from individual springs or dashpots:
\begin{equation}
 \bK = \sum_{\text{spring~}ij} k_{ij} \bN_{ij} 
 ~\text{and}~
 \bC = \sum_{\text{spring~}ij} c_{ij} \bN_{ij} 
\end{equation}
where $k_{ij}$ (resp.\ $c_{ij}$) is the spring (resp.\ damping) constant for the spring (resp.\ damper) between nodes $i$ and $j$ and 
\begin{equation}
 \bN_{ij} = \left( \begin{bmatrix} \be_i& \be_j \end{bmatrix} 
  \begin{bmatrix} 1 & -1\\ -1 & 1 \end{bmatrix}
  \begin{bmatrix} \be_i^T\\ \be_j^T \end{bmatrix} \right) \otimes ( \bn_{ij} \bn_{ij}^T).
\end{equation}
Here we used the Kronecker product $\otimes$ and the $i-$th canonical basis vector $\be_i \in \real^{N}$. As in the case of a single spring, the vector  $\bn_{ij} \in \real^d$ is a unit length vector with direction $\bx_i -\bx_j$.

For a network of springs, masses, and dashpots, the balance of forces (Newton's
second law) gives a system of ODEs, which is given in \eqref{eq:2ndnewt}.  Now,
recall the Laplace transform:
\[
 \btu(\lambda) = \Lapl[\bu(t)](\lambda)=\int_0^\infty e^{-\lambda t}\bu(t)dt.
\]
Applying the Laplace transform to \eqref{eq:2ndnewt}, the ODE becomes
\begin{equation} \label{eq:allterm}
 (\bK + \lambda\bC + \lambda^2\bM)\btu = \btf.
\end{equation}
In the following, we only work in the Laplace domain; the tilde notation is dropped for simplicity. Our results can also be formulated in the frequency domain by using the Fourier transform instead of the Laplace transform on \eqref{eq:2ndnewt}, or simply by setting $\lambda = \imath \omega$.

For the purpose of introducing the problem, we first consider the case where all nodes have a non-zero mass. Let us partition the nodes of the network into terminal nodes ($B$) and interior nodes ($I$). At the interior nodes, the external forces are zero; the displacement of the interior nodes is determined by the displacement of the boundary nodes, except at the few frequencies corresponding to the resonances of the system. If subscripts $B$ and $I$ denote the quantities referring to their respective nodes, then the displacements at the boundary nodes are related to the forces at the boundary nodes by
\[
 \bbf_B = \bW(\lambda) \bu_B,
\]
where the displacement to forces map is 
\begin{equation} \label{eq:w}
 \begin{aligned}
 \bW(\lambda) &= \bK_{BB} + \lambda\bC_{BB}+ \lambda^2\bM_{BB} \\
 &- (\bK_{BI} + \lambda\bC_{BI}) (\bK_{II} + \lambda\bC_{II} + \lambda^2\bM_{II})^{-1}(\bK_{IB}+\lambda\bC_{IB}),
 \end{aligned}
\end{equation}
which is the Schur complement of the $II$ block in the matrix $\bK + \lambda \bC + \lambda^2 \bM$ (see Appendix~\ref{app:schur} for the definition of the Schur complement and properties).

\begin{remark}
We assume throughout this paper that each spring-damper pair occupies an arbitrarily small volume surrounding the segment connecting the corresponding nodes. We also assume that the cavities with viscous fluid (and hence the masses) can be made arbitrarily small.  We need these assumptions to constrain the internal nodes in our synthesis result to be within an $\epsilon-$neighborhood of the convex hull of the terminal nodes.
\end{remark}

%%%%%%%%%%%%%%%%%%%%%%%%%%%%%%%%%%%%%%%%%%%%%%%%%%%%%%%%%%%%%%%%%%%%%%%%
\section{Massless Rayleigh damped networks}
\label{sec:massless}

First, we characterize the networks with Rayleigh damping and no mass. Later, in section~\ref{sec:char}, we use such networks to simplify the work for general networks. The theorems in this section are a natural extension of the characterization and synthesis in \cite{Camar:2003:DCS,Guevara:2011:CCS} for static networks, i.e., networks with springs only. In the static case the response matrix is given by \eqref{eq:w} with $\lambda = 0$.  The forces $\bbf = ( \bbf^T_1, \ldots, \bbf^T_n )^T$ at the nodes $\bx_1,\ldots,\bx_n$ form a {\em balanced system of forces} when
\begin{equation}
 \sum_{i=1}^n \bbf_i = \bzero 
 ~ \text{(equilibrium of forces) and}~
 \sum_{i=1}^n \bx_i \times \bbf_i = \bzero ~  \text{(equilibrium of torques)},
 \label{eq:balanced}
\end{equation}
where $\bu \times \bv = (u_2 v_3 - u_3 v_2, u_3 v_1 - u_1 v_3, u_1 v_2 - u_2 v_1)$ if $d=3$ and $\bu \times \bv = u_1 v_2 - u_2 v_1$ if $d=2$, is the usual cross product. The characterization and synthesis theorems in \cite{Camar:2003:DCS,Milton:2008:RRM,Guevara:2011:CCS} are summarized in the next theorem.

\begin{theorem} \label{spring_static}
The response matrix $\bW$ of any network of springs is symmetric positive semidefinite where each column is a balanced system of forces at the terminal nodes. (Characterization)

Conversely, any matrix $\bW$ that is symmetric positive semidefinite where each column is a balanced system of forces at the terminals can be realized by a network of only springs. Moreover the internal nodes of such network can be chosen so as to avoid a finite number of points and to be within an $\epsilon-$neighborhood of the convex hull of the terminal nodes. (Synthesis)
\end{theorem}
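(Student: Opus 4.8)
I would treat the two directions separately; the characterization is essentially an exercise with Schur complements, while the synthesis reduces to building a single rank-one gadget. For the characterization, first note that the stiffness matrix $\bK$ is symmetric positive semidefinite: it is the non-negative combination $\sum k_{ij}\bN_{ij}$, and each $\bN_{ij}$ is the Kronecker product of the positive semidefinite edge matrix built from $\begin{bmatrix}1&-1\\-1&1\end{bmatrix}$ with the rank-one projection $\bn_{ij}\bn_{ij}^T$, hence is itself positive semidefinite. The static response $\bW=\bW(0)$ is the Schur complement of the $II$ block of $\bK$ (Appendix~\ref{app:schur}), so it is symmetric, and the variational identity $\bu_B^T\bW\bu_B=\min_{\bu_I}\bu^T\bK\bu$, where $\bu$ is the field with terminal part $\bu_B$ and interior part $\bu_I$, shows $\bW\succeq 0$. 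For the balanced-columns property I would use that a rigid-body velocity field on the whole network, $\bu_i=\ba+\bS\bx_i$ with $\bS^T=-\bS$, satisfies $\bn_{ij}^T(\bu_i-\bu_j)=\bn_{ij}^T\bS(\bx_i-\bx_j)=0$ because $\bx_i-\bx_j$ is parallel to $\bn_{ij}$; hence $\bN_{ij}\bu=\bzero$ for every spring and $\bK\bu=\bzero$. Since every terminal rigid-body field extends to one on the whole network, the variational identity gives $\bW\bu_B=\bzero$ for every terminal rigid-body field $\bu_B$, and, $\bW$ being symmetric, each of its columns lies in its range and is thus orthogonal to all terminal rigid-body fields. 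As a force field at the terminals is balanced precisely when it is orthogonal to every such field — a restatement of \eqref{eq:balanced} — every column of $\bW$ is balanced.

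For the synthesis I would reduce to a rank-one building block. Decompose the prescribed $\bW$ spectrally as $\bW=\sum_k\sigma_k\bv_k\bv_k^T$ with $\sigma_k>0$; each $\bv_k$ is orthogonal to $\ker\bW$, which contains all terminal rigid-body fields, so each $\bv_k$ is a balanced force field. Overlaying several spring networks that share the terminal nodes but have disjoint interiors adds their stiffness matrices block by block and hence adds their responses, while multiplying every spring constant of a network by $\sigma$ multiplies its response by $\sigma$. So it suffices to produce, for an arbitrary balanced force field $\bv$ at the terminals, a spring network whose response is $\bv\bv^T$.

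The rank-one realization is the crux, and I expect it to be the main obstacle. If $\bv$ is supported on only two terminals, balance forces $\bv=(\bv_1,-\bv_1)$ with $\bv_1$ parallel to $\bx_1-\bx_2$, so a single spring of constant $\norm{\bv_1}^2$ between those terminals already has response $\bv\bv^T$. For a general balanced $\bv$ one has to build a truss holding the force system $\bv$ in static equilibrium, with internal nodes confined to an $\epsilon$-neighborhood of the convex hull of the terminals, and then tune the spring constants: the geometric freedom present in dimension $d=2,3$ is exactly what lets the network be stiff only along $\bv$ and floppy in every remaining balanced or rigid direction, as a rank-one response requires, and a small generic perturbation of the finitely many internal nodes moves them off any prescribed finite set without changing the response. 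Carrying out this construction and verifying the localization and avoidance claims is the substance of \cite{Camar:2003:DCS,Guevara:2011:CCS}, and I would follow it rather than attempt a new argument.
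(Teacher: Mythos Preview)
Your proposal is correct and follows the same route as the paper, which simply cites \cite{Guevara:2011:CCS} for both directions; you have filled in the standard characterization argument (positive semidefiniteness via the Schur complement variational identity, balanced columns via the rigid-body kernel) and, for the synthesis, reduced to the rank-one gadget by spectral decomposition and superposition exactly as in the cited references, deferring the geometric construction to \cite{Camar:2003:DCS,Guevara:2011:CCS} just as the paper does. One small point worth making explicit: from $\bu_B^T\bW\bu_B=0$ on terminal rigid-body fields you conclude $\bW\bu_B=\bzero$, which uses $\bW\succeq 0$; and your variational formulation of the Schur complement is in fact more robust than the inverse in \eqref{eq:w} when $\bK_{II}$ is singular.
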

\begin{proof}
See e.g. \cite[Lemma 2]{Guevara:2011:CCS} for the characterization and \cite[Theorem 1]{Guevara:2011:CCS} for the synthesis.
\end{proof}

Theorem~\ref{spring_static} can be readily extended to massless Rayleigh damped networks.

\begin{theorem} \label{thm:massless_Rayleigh}
The response of a massless network with Rayleigh damping is of the form
\begin{equation} \label{eq:massless_W}
 (1+\alpha\lambda)\bW,
\end{equation}
where $\alpha\geq0$, $\lambda$ is the Laplace parameter, and $\bW$ is the response of a static network, i.e., $\bW$ is symmetric positive semidefinite with each column being a balanced system of forces at the terminals $\bx_1,\ldots,\bx_n$ (as in theorem~\ref{spring_static}).

Conversely, given any matrix-valued function of $\lambda$ of the form \eqref{eq:massless_W} where $\bW$ is the response of a static network at the nodes $\bx_1,\ldots,\bx_n$, there is a massless Rayleigh damped network with terminals at the nodes $\bx_1,\ldots,\bx_n$ realizing it. Moreover the internal nodes of such network can be chosen so as to avoid a finite number of points and to be within an $\epsilon-$neighborhood of the convex hull of the terminal nodes.
\end{theorem}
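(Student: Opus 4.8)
The plan is to exploit the algebraic fact that, for a massless network, the Rayleigh damping hypothesis forces the full system matrix to be a scalar multiple of the stiffness matrix. Indeed, $\bM = 0$ gives $\bC = \alpha\bK + \beta\bM = \alpha\bK$, so that
\begin{equation*}
 \bK + \lambda\bC + \lambda^2\bM = (1+\alpha\lambda)\bK .
\end{equation*}
Since the Schur complement of the $II$ block is homogeneous of degree one under scalar multiplication of the matrix (see Appendix~\ref{app:schur}), formula~\eqref{eq:w} yields $\bW(\lambda) = (1+\alpha\lambda)\bW_0$, where $\bW_0$ is the Schur complement of the $II$ block of $\bK$, i.e.\ precisely the response of the static spring network with stiffness matrix $\bK$. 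The characterization half of the theorem then follows at once by applying the characterization half of Theorem~\ref{spring_static} to $\bW_0$: it is symmetric positive semidefinite with each column a balanced system of forces at $\bx_1,\ldots,\bx_n$.

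For the synthesis half, I would start from a matrix-valued function of the prescribed form $(1+\alpha\lambda)\bW$ with $\bW$ the response of a static network at the nodes $\bx_1,\ldots,\bx_n$. By the synthesis half of Theorem~\ref{spring_static} there is a network of springs, with terminals $\bx_1,\ldots,\bx_n$ and internal nodes avoiding a prescribed finite set of points and lying within an $\epsilon$-neighborhood of the convex hull of the terminals, whose response is $\bW$; let $\bK = \sum_{\text{spring }ij} k_{ij}\bN_{ij}$ be its stiffness matrix. I would then modify this network by placing, in parallel with each spring $ij$, a dashpot with damping constant $c_{ij} = \alpha k_{ij}$, and by assigning zero mass to every node. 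The resulting network has damping matrix $\bC = \sum_{\text{spring }ij}\alpha k_{ij}\bN_{ij} = \alpha\bK$ and mass matrix $\bM = \bzero$, hence is a massless Rayleigh damped network, and by the computation above its response is $(1+\alpha\lambda)\bW$. No nodes were moved, so the geometric constraints on the internal nodes are inherited directly from Theorem~\ref{spring_static}.

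I do not expect a genuine obstacle here: the whole argument reduces to the factorization $(1+\alpha\lambda)\bK$, the homogeneity of the Schur complement, and Theorem~\ref{spring_static}. The only point deserving a word of care is that \eqref{eq:w} (equivalently, the Schur complement) is defined only at frequencies where $\bK_{II}+\lambda\bC_{II}+\lambda^2\bM_{II} = (1+\alpha\lambda)\bK_{II}$ is invertible, i.e.\ away from $\lambda=-1/\alpha$ and away from the resonances of the underlying static network; but this is exactly the non-resonance caveat already implicit in the undamped case of Theorem~\ref{spring_static}, and it does not affect the stated form of the response.
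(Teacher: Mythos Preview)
Your proof is correct and follows essentially the same route as the paper's: both directions rest on the observation that $\bM=\bzero$ forces the full system matrix to be $(1+\alpha\lambda)\bK$, then invoke the degree-one homogeneity of the Schur complement together with Theorem~\ref{spring_static} for the characterization and synthesis of $\bW$. The only addition in the paper is a remark that the parallel spring--dashpot pairs occupy arbitrarily thin segments, so the network transformations underlying Theorem~\ref{spring_static} remain valid; you might mention this, but it is a cosmetic point.
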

\begin{proof}
The forward direction is due to the homogeneity of degree 1 of the Schur complement (see Appendix~\ref{app:schur}). Let $\bK$ be the stiffness matrix for a static network and $\bW$ the response matrix of the network at some terminal nodes $B$. Then the frequency response of the massless network with Rayleigh damping when all the nodes are terminals is $(1+\alpha\lambda)\bK$. Then by  taking the Schur complement, the response at the nodes $B$ is $(1+\alpha\lambda)\bW$. 

To prove the converse, assume we are given a frequency response of the form $(1+\alpha\lambda)\bW$, where $\bW$ is the response of a static network. By the second part of theorem~\ref{spring_static}, there is a  network of springs with stiffness matrix $\bK$ with response $\bW$ at the nodes $B$. The internal nodes of this network can be chosen so as to avoid a finite number of points and to be in an arbitrarily small neighborhood of the convex hull of the terminals. Then a network that has $(1+\alpha \lambda) \bW$ as its response is the network with response at all the nodes $(1+\alpha \lambda)\bK$, i.e., the network obtained from the second part of theorem~\ref{spring_static} with dashpots in parallel with each spring, each dashpot having a damping constant being $\alpha$ times the stiffness of the associated spring. Here we have used the assumption that each spring and damper pair occupies an arbitrarily thin segment between the corresponding nodes, so the network transformations that do not change the response in \cite[\S 2.3]{Guevara:2011:CCS} are still valid.
\end{proof}

\begin{remark}
Theorem~\ref{thm:massless_Rayleigh} is valid for planar networks, i.e., networks for which all the springs lie in a plane. This is because theorem~\ref{spring_static} is valid for planar networks: one can always realize the response of a planar network with a planar network.
\end{remark}

%%%%%%%%%%%%%%%%%%%%%%%%%%%%%%%%%%%%%%%%%%%%%%%%%%%%%%%%%%%%%%%%%%%%%%%%
\section{Characterization of general Rayleigh damped networks}
\label{sec:char}

In this section, we give conditions that the response of Rayleigh damped networks needs to satisfy. In general, we may have massless nodes in the network; these are dealt with in lemma~\ref{l:remove_massless}, where we use the characterization for the massless case from section~\ref{sec:massless} to eliminate any massless interior nodes. Theorem~\ref{thm:rayleigh_char} is the characterization result for general Rayleigh networks. Then Proposition~\ref{prop:sign} shows that the conditions of theorem~\ref{thm:rayleigh_char} are consistent with the characterization at a single frequency found by \citet{Milton:2008:RRM} (a  condition which means, in physical terms, that damping can only consume energy).

Let us partition the interior nodes $I$ into the set of interior nodes $J$ with positive mass and the set of massless nodes $L$. Clearly, $\bM_{JJ}$ is positive definite while $\bM_{LL}=\bzero$. The following lemma is similar to \cite[Lemma 3]{Guevara:2011:CCS} and reduces the characterization problem for networks with massless nodes to networks where each node has mass.

\begin{lemma} \label{l:remove_massless}
Let $\bK$ be a $Nd \times Nd$ stiffness matrix and let $\bM$ be a $Nd \times Nd$ (diagonal) mass matrix, where $N = |B \cup I|$. The response at the terminals is
\[
 \begin{aligned}
 \bW(\lambda)&=\btK_{BB}+\lambda\btC_{BB}+\lambda^2\bM_{BB}\\&-(\btK_{BJ}+\lambda\btC_{BJ})(\btK_{JJ}+\lambda\btC_{JJ}+\lambda^2\bM_{JJ})^{-1}(\btK_{JB}+\lambda\btC_{JB}),
 \end{aligned}
\]
for all frequencies $\lambda$ for which $\det(\btK_{JJ}+\lambda\btC_{JJ}+\lambda^2\bM_{JJ} ) \neq 0$. Here the tilde matrices are submatrices of the matrices
\begin{equation} \label{eq:tildeKC}
 \begin{aligned}
 \btK & = \begin{bmatrix} \btK_{BB} & \btK_{BJ} \\ \btK_{JB} & \btK_{JJ} \end{bmatrix}  = \begin{bmatrix} \bK_{BB} & \bK_{BJ} \\ \bK_{JB} & \bK_{JJ} \end{bmatrix} - \begin{bmatrix} \bK_{BL} \\ \bK_{JL} \end{bmatrix}\bK_{LL}^\dagger\begin{bmatrix} \bK_{LB} & \bK_{LJ}\end{bmatrix} ~\text{and}\\
 \btC &= \begin{bmatrix} \btC_{BB} & \btC_{BJ} \\ \btC_{JB} & \btC_{JJ} \end{bmatrix}  = \begin{bmatrix} \bC_{BB} & \bC_{BJ} \\ \bC_{JB} & \bC_{JJ} \end{bmatrix} - \begin{bmatrix} \bC_{BL} \\ \bC_{JL} \end{bmatrix}\bC_{LL}^\dagger\begin{bmatrix} \bC_{LB} & \bC_{LJ}\end{bmatrix},
 \end{aligned}
\end{equation}
where the symbol $\dagger$ denotes the Moore-Penrose pseudoinverse. Also, $\btK_{JJ}$ and $\btC_{JJ}$ are positive semidefinite.
\end{lemma}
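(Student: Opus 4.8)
The plan is to obtain $\bW(\lambda)$ from the Laplace--domain balance law \eqref{eq:allterm}, $(\bK+\lambda\bC+\lambda^2\bM)\bu=\bbf$ with $\bbf_I=\bzero$, by eliminating the interior displacements in two stages: first the massless nodes $L$, then the massive interior nodes $J$. The first stage is exactly the static reduction of \cite[Lemma 3]{Guevara:2011:CCS} in disguise, because the Rayleigh assumption forces the $L$--rows and columns of $\bK+\lambda\bC+\lambda^2\bM$ to be scalar multiples of those of $\bK$.

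I would start with this structural observation. The mass matrix $\bM$ is diagonal and the nodes in $L$ are massless, so $\bM_{LL}=\bzero$, $\bM_{BL}=\bzero$, $\bM_{JL}=\bzero$; together with $\bC=\alpha\bK+\beta\bM$ this gives $\bC_{LL}=\alpha\bK_{LL}$, $\bC_{L,B\cup J}=\alpha\bK_{L,B\cup J}$ and $\bC_{B\cup J,L}=\alpha\bK_{B\cup J,L}$. Writing $\bA(\lambda):=\bK+\lambda\bC+\lambda^2\bM$, this means $\bA(\lambda)_{LL}=(1+\alpha\lambda)\bK_{LL}$ and $\bA(\lambda)_{L,B\cup J}=(1+\alpha\lambda)\bK_{L,B\cup J}$ (and the transpose of the latter).

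Next, eliminate $L$. Since $\bK=\sum_{ij}k_{ij}\bN_{ij}$ is positive semidefinite (each $\bN_{ij}$ is, and $k_{ij}\ge0$), one has $\operatorname{range}(\bK_{L,B\cup J})\subseteq\operatorname{range}(\bK_{LL})$, so the generalized Schur complement with the pseudoinverse $\bK_{LL}^\dagger$ is well defined and genuinely represents eliminating $\bu_L$: the zero--energy modes in $\ker\bK_{LL}$ carry no force and do not couple to the remaining nodes. This is the point where the real work lies, and it is imported from \cite[Lemma 3]{Guevara:2011:CCS} via the observation above. Using $(cA)^\dagger=c^{-1}A^\dagger$ for a nonzero scalar $c$, the three occurrences of $(1+\alpha\lambda)$ coming from the two off--diagonal $L$--blocks and from $[(1+\alpha\lambda)\bK_{LL}]^\dagger$ collapse to one, so the reduced matrix on $B\cup J$ is
\[
 \bK_{B\cup J}+\lambda\bC_{B\cup J}+\lambda^2\bM_{B\cup J}-(1+\alpha\lambda)\,\bK_{B\cup J,L}\bK_{LL}^\dagger\bK_{L,B\cup J}.
\]
A short computation using $\bC_{B\cup J}=\alpha\bK_{B\cup J}+\beta\bM_{B\cup J}$ and $\bC_{LL}^\dagger=\alpha^{-1}\bK_{LL}^\dagger$ (for $\alpha>0$; the case $\alpha=0$, where $\bC_{LL}=\bzero$, is even simpler) identifies this, after collecting powers of $\lambda$, with $\btK+\lambda\btC+\lambda^2\bM_{B\cup J}$ for $\btK,\btC$ as in \eqref{eq:tildeKC}; en route one checks $\btC=\alpha\btK+\beta\bM_{B\cup J}$, i.e.\ the Rayleigh structure is inherited by the reduced network. (At the lone value $\lambda=-1/\alpha$ the naive elimination degenerates, but the correction term then vanishes and both expressions collapse to the same multiple of $\bM_{B\cup J}$, so by rationality in $\lambda$ the identity holds throughout.)

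Finally, eliminate $J$ and check the sign conditions. For every $\lambda$ with $\det(\btK_{JJ}+\lambda\btC_{JJ}+\lambda^2\bM_{JJ})\neq0$ the $JJ$--block of the reduced matrix is invertible, so eliminating $\bu_J$ is an ordinary Schur complement and produces exactly the displayed formula for $\bW(\lambda)$. For positive semidefiniteness, $\btK$ is a generalized Schur complement of the positive semidefinite matrix $\bK$ (with the range condition above), hence positive semidefinite, and $\btK_{JJ}$ is a principal submatrix of it; likewise $\btC_{JJ}$ is positive semidefinite, either by the same argument applied to the positive semidefinite matrix $\bC$, or directly from $\btC=\alpha\btK+\beta\bM_{B\cup J}$ with $\alpha,\beta\ge0$. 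The one step I expect to require care is the justification that the pseudoinverse elimination of the possibly rank--deficient cluster $L$ returns the true reduced relation rather than an artifact --- and, as noted, that is exactly \cite[Lemma 3]{Guevara:2011:CCS} once the proportionality $\bA(\lambda)_{L,B\cup J}=(1+\alpha\lambda)\bK_{L,B\cup J}$ is in hand.
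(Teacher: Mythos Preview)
Your proof is correct and follows the same two-stage elimination as the paper (first the massless nodes $L$ via a pseudoinverse Schur complement as in \cite[Lemma~3]{Guevara:2011:CCS}, then the massive interior nodes $J$ via an ordinary Schur complement), with positive semidefiniteness inherited from the Schur complement of a positive semidefinite matrix. You are simply more explicit than the paper about why the $L$-elimination yields the separated $\btK+\lambda\btC$ structure---namely the Rayleigh proportionality $\bC_{L,\cdot}=\alpha\bK_{L,\cdot}$---and you fold into the argument the preservation $\btC=\alpha\btK+\beta\bM_{B\cup J}$ that the paper defers to the remark following the lemma.
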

\begin{proof}
The response of the network with terminals $B \cup J$ and internal nodes $L$ is $\btK + \lambda \btC + \lambda^2 \diag(\bM_{BB}, \bM_{JJ})$. Equilibrating forces at the nodes $J$ (i.e., taking the Schur complement of the $JJ$ block) gives the desired result. Note that $\btK_{JJ}$ and $\btC_{JJ}$ are positive semidefinite as they are the Schur complement of positive semidefinite matrices (see e.g., \eqref{eq:schursign}).
\end{proof}

\begin{remark} \label{rem:massless}
For Rayleigh networks, the elimination of massless nodes made in lemma~\ref{l:remove_massless} preserves the Rayleigh damping structure. If we have Rayleigh damping then $\bC = \alpha \bK + \beta \bM$ (at all nodes) and
\[
 \btC = \alpha \btK + \beta \diag(\bM_{BB},\bM_{JJ}),
\]
where the tilde matrices are defined as in lemma~\ref{l:remove_massless}. Indeed a simple calculation gives
\[
 \begin{aligned}
  \btC &= 
  \alpha \begin{bmatrix} \bK_{BB} & \bK_{BJ} \\ \bK_{JB} & \bK_{JJ} \end{bmatrix}
- \alpha^2\begin{bmatrix} \bK_{BL} \\ \bK_{JL} \end{bmatrix}(\alpha\bK_{LL})^\dagger\begin{bmatrix}\bK_{LB} & \bK_{LJ}\end{bmatrix}\\
  &+\beta  \begin{bmatrix} \bM_{BB} & \\ & \bM_{JJ} \end{bmatrix}.\\
 \end{aligned}
\]
\end{remark}

We can now formulate a characterization of the response at the terminal nodes of a Rayleigh damped network.

\begin{theorem} \label{thm:rayleigh_char}
Consider a damped mass-spring network with terminals $\bx_1,\ldots,\bx_n$ with Rayleigh damping, i.e., the damping matrix is of the form $\bC = \alpha \bK + \beta \bM$, where $\alpha, \beta \geq 0$ are given. The displacement-to-forces map of any such network is of the form:
\begin{equation} \label{eq:W_poles}
 \bW(\lambda)=(1+\alpha\lambda)\bA+(\beta\lambda+\lambda^2)\bM-\sum_{j=1}^p \frac{(1+\alpha\lambda)^2\bR^{(j)}}{\sigma_j+\lambda(\alpha\sigma_j+\beta)+\lambda^2},
\end{equation}
where
\begin{enumerate}[i.]
 \item $\bR^{(j)}$ is real symmetric positive semidefinite and $\sigma_j>0$.
 \item $\bM$ is real diagonal positive semidefinite.
 \item $\bA$ is real symmetric positive semidefinite.
 \item There are at most $2p$ distinct poles: namely the roots of the polynomials 
  \[
   q_j(\lambda) = \sigma_j+\lambda(\alpha\sigma_j+\beta)+\lambda^2, ~\text{for}~ j=1,\ldots,p.
  \]
  Moreover, for all roots $\lambda^*$ of $q_j(\lambda)$, we have $\Re(\lambda^*)\leq0$. This is the second law of thermodynamics: damping consumes energy.
 \item The response for $\lambda=0$, i.e.,
  \[
   \bW(0)=\bA-\sum_{j=1}^p \frac{\bR^{(j)}}{\sigma_j},
  \]
  is the response of a static network. The characterization for static elastic networks \cite{Camar:2003:DCS} states that $\bW(0)$ must be real symmetric positive semidefinite with every column being a balanced system of forces (see \eqref{eq:balanced}) at the terminals $\bx_1,\cdots,\bx_n$. 
\end{enumerate}
\end{theorem}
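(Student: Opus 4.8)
The plan is to write $\bW(\lambda)$ explicitly as the Schur complement it is, use the Rayleigh identity to pull out the common factor $(1+\alpha\lambda)$ from the off-diagonal coupling, and then simultaneously diagonalize the interior block to read off the partial-fraction form \eqref{eq:W_poles}. As a first step I would apply lemma~\ref{l:remove_massless} together with remark~\ref{rem:massless} to eliminate every massless interior node; the resulting network still has Rayleigh damping with the \emph{same} constants $\alpha,\beta$, still has a positive semidefinite stiffness matrix, and now carries a positive mass at every interior node. Renaming the reduced matrices $\btK,\btC$ back to $\bK,\bC$, we thus have $\bK\succeq\bzero$, a diagonal mass matrix $\bM=\diag(\bM_{BB},\bM_{JJ})$ with $\bM_{JJ}$ positive definite, and $\bC=\alpha\bK+\beta\bM$, so that $\bK+\lambda\bC+\lambda^2\bM=(1+\alpha\lambda)\bK+(\beta\lambda+\lambda^2)\bM$.

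Writing $a=1+\alpha\lambda$ and $b=\beta\lambda+\lambda^2$, and using that $\bM$ is diagonal (so $\bM_{BJ}=\bM_{JB}=\bzero$, whence $\bK_{BJ}+\lambda\bC_{BJ}=a\,\bK_{BJ}$ and likewise for the $JB$ block), the Schur complement of the $JJ$ block (cf.\ \eqref{eq:w} and lemma~\ref{l:remove_massless}) reads
\[
 \bW(\lambda)=a\,\bK_{BB}+b\,\bM_{BB}-a^2\,\bK_{BJ}\,(a\,\bK_{JJ}+b\,\bM_{JJ})^{-1}\,\bK_{JB}.
\]
Since $\bM_{JJ}\succ\bzero$ I can then use the congruence $a\,\bK_{JJ}+b\,\bM_{JJ}=\bM_{JJ}^{1/2}(a\,\bS+b\,\bI)\bM_{JJ}^{1/2}$ with $\bS:=\bM_{JJ}^{-1/2}\bK_{JJ}\bM_{JJ}^{-1/2}\succeq\bzero$, diagonalize $\bS=\bU\bSigma\bU^T$ with $\bU$ orthogonal and $\bSigma=\diag(\sigma_1,\ldots,\sigma_m)$, $\sigma_j\geq0$, and set $\bV:=\bK_{BJ}\bM_{JJ}^{-1/2}\bU=[\,\bv_1\,|\cdots|\,\bv_m\,]$. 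This turns the middle term into a sum of rank-one pieces with denominators $a\sigma_j+b=\sigma_j+\lambda(\alpha\sigma_j+\beta)+\lambda^2=q_j(\lambda)$, giving
\[
 \bW(\lambda)=(1+\alpha\lambda)\bK_{BB}+(\beta\lambda+\lambda^2)\bM_{BB}-\sum_{j=1}^m\frac{(1+\alpha\lambda)^2\,\bv_j\bv_j^T}{q_j(\lambda)}.
\]
(In particular $\det(a\,\bK_{JJ}+b\,\bM_{JJ})=\det\bM_{JJ}\prod_j q_j(\lambda)$ is not identically zero, so the Schur complement is a genuine rational matrix function of $\lambda$.)

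The step I expect to be the main obstacle is to rule out the eigenvalues $\sigma_j=0$, which would otherwise produce a spurious pole of $\bW$ at $\lambda=0$. The claim is that $\sigma_j=0$ forces $\bv_j=\bzero$: if $\sigma_j=0$ then $\bK_{JJ}\bw_j=\bzero$ where $\bw_j:=\bM_{JJ}^{-1/2}\bu_j$, and since $\bK\succeq\bzero$ while the quadratic form of $\bK$ evaluated at the vector $(\bzero,\bw_j)$ equals $\bw_j^T\bK_{JJ}\bw_j=0$, the vector $(\bzero,\bw_j)$ lies in $\ker\bK$; hence $\bK_{BJ}\bw_j=\bzero$, i.e.\ $\bv_j=\bzero$. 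Discarding these (and any other) vanishing terms and relabeling the remaining indices as $j=1,\ldots,p$ with $\sigma_j>0$ produces precisely \eqref{eq:W_poles} with $\bA=\bK_{BB}$, with $\bM$ the terminal mass matrix $\bM_{BB}$, and with $\bR^{(j)}=\bv_j\bv_j^T$.

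Finally I would check (i)--(v). Properties (i)--(iii) are immediate: each $\bR^{(j)}=\bv_j\bv_j^T$ is real symmetric positive semidefinite and $\sigma_j>0$; $\bM_{BB}$ is a diagonal submatrix of a diagonal positive semidefinite matrix; and $\bA=\bK_{BB}$ is a principal submatrix of $\bK\succeq\bzero$. For (iv), the only denominators are the quadratics $q_j$, so there are at most $2p$ poles; and for a root $\lambda^*$ of $q_j$ the product of the two roots is $\sigma_j>0$ while their sum is $-(\alpha\sigma_j+\beta)\leq0$, so when the discriminant is nonnegative both roots are real, necessarily of the same sign and hence both negative, and when it is negative $\Re\lambda^*=-(\alpha\sigma_j+\beta)/2\leq0$. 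For (v), $q_j(0)=\sigma_j>0$ shows $\lambda=0$ is not a pole, so $\bW(0)=\bA-\sum_{j=1}^p\bR^{(j)}/\sigma_j=\lim_{\lambda\to0}\bW(\lambda)$ is the response at the terminals of the underlying spring network with stiffness matrix $\bK$; by theorem~\ref{spring_static} (the static characterization of \citet{Camar:2003:DCS}) this matrix is real symmetric positive semidefinite with every column a balanced system of forces at $\bx_1,\ldots,\bx_n$, which completes the verification.
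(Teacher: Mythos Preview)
Your proof is correct and follows essentially the same route as the paper: eliminate massless interior nodes via lemma~\ref{l:remove_massless} and remark~\ref{rem:massless}, simultaneously diagonalize $\btK_{JJ}$ and $\bM_{JJ}$ through the congruence with $\bM_{JJ}^{1/2}$, and read off the partial-fraction form. Two minor points where your write-up is actually cleaner than the paper's: you give a self-contained argument for the floppy-mode step ($\sigma_j=0\Rightarrow\bv_j=\bzero$) using only positive semidefiniteness of $\bK$, whereas the paper defers to lemmas in \cite{Guevara:2011:CCS}; and your case split on the discriminant in (iv) is more explicit than the paper's one-line computation of $\Re\lambda_+$.
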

\begin{proof}
We begin by using lemma~\ref{l:remove_massless} to remove the massless interior nodes. Let $I=J\cup L$ where $J$ are the positive mass nodes and $L$ are the massless nodes. Let $\btK$ and $\btC$ be as in \eqref{eq:tildeKC}. Then lemma~\ref{l:remove_massless} gives
\begin{equation} \label{eq:res1}
 \begin{aligned}
  \bW(\lambda) &= \btK_{BB} + \lambda\btC_{BB}+\lambda^2\bM_{BB}\\ &- (\btK_{BJ}+\lambda\btC_{BJ})(\btK_{JJ}+\lambda\btC_{JJ}+\lambda^2\bM_{JJ})^{-1}(\btK_{JB}+\lambda\btC_{JB}).
 \end{aligned}
\end{equation}

Since $\bM_{JJ}$ is real diagonal positive definite, it has a square root $\bM_{JJ}^{1/2}$. Now $\btK_{JJ}$ is real symmetric positive semidefinite and so is $\bM_{JJ}^{-1/2}\btK_{JJ}\bM_{JJ}^{-1/2}$. Then by the spectral theorem, there exists $\bU$ unitary and $\bSigma$ real diagonal with nonnegative entries so that
\[
 \bM_{JJ}^{-1/2}\btK_{JJ}\bM_{JJ}^{-1/2}=\bU\bSigma\bU^T.
\]
Consider the matrix $\bX=\bM_{JJ}^{-1/2} \bU$, which clearly is invertible and is so that
\[
 \bX^T\btK_{JJ}\bX=\bSigma ~\text{ and }~ \bX^T\bM_{JJ}\bX=\bI,
\]
where $\bI$ is the identity matrix (in this case $|J|\times|J|$). The resolvent part of \eqref{eq:res1} becomes 
\begin{equation*}
 \begin{aligned}
  \bQ(\lambda) &= \btK_{JJ}+\lambda\btC_{JJ}+\lambda^2\bM_{JJ}\\
  &= \bX^{-T}(\bX^T\btK_{JJ}\bX + \lambda\bX^T\btC_{JJ}\bX + \lambda^2\bX^T\bM_{JJ}\bX)\bX^{-1}\\
  &= \bX^{-T}(\bSigma + \lambda(\alpha\bSigma+\beta\bI) + \lambda^2\bI)\bX^{-1} \\
  &= \bX^{-T}\bD(\lambda)\bX^{-1}.
 \end{aligned}
\end{equation*}
Here we used the result from remark~\ref{rem:massless}, i.e., that $\btC_{JJ} = \alpha \btK_{JJ} + \beta \bM_{JJ}$.  The matrix $\bD(\lambda)$ is diagonal and therefore easily inverted. Then the resolvent $\left[ \bQ(\lambda) \right]^{-1}$ can be written as
\[
 \begin{aligned}
  \left[ \bQ(\lambda) \right]^{-1}&=\bX\bD(\lambda)^{-1}\bX^{T} \\
  &=\bX\operatorname{diag}\left(\frac{1}{\sigma_i + \lambda(\alpha\sigma_i+\beta) + \lambda^2}\right)\bX^{T},
 \end{aligned}
\]
where $\sigma_i$ are the diagonal elements of $\bSigma$. Let $\bK_{BJ}\bX=[\bv_1\cdots\bv_{|J|}]$. By remark~\ref{rem:massless} we have $\btC_{BB} = \alpha \btK_{BB} + \beta \bM_{BB}$ and $\btC_{BJ} = \alpha \btK_{BJ}$. Therefore the response \eqref{eq:res1} can be written in the form
\[
 \bW(\lambda)=(1+\alpha\lambda)\btK_{BB}+ (\beta\lambda+\lambda^2)\bM_{BB}-\sum_{i=1}^{|J|}\frac{\bv_i \bv_i^T(1+\alpha\lambda)^2}{\sigma_i + \lambda(\alpha\sigma_i+\beta) + \lambda^2}.
\]

Notice that the $\sigma_i$ may be zero. This is the case when the network has a so called ``floppy mode'' (non-zero displacements with zero force required). The same argument in \cite[Lemma 12]{Guevara:2011:CCS} can be used to deal with floppy modes. Indeed Lemma 1 in \cite{Guevara:2011:CCS} can be used to show that $\sigma_i = 0$ implies $\bv_i =0$.
By adding up the non-zero residues $\bv_i \bv_i^T$ that share the same denominator, we get the symmetric positive semidefinite residues $\bR^{(j)}$ of the statement of the theorem. The result follows by noticing that $\bA = \btK_{BB}$ is symmetric positive semidefinite, $\bM_{BB}$ is diagonal with nonnegative entries and at $\lambda=0$, the response
\[
 \bW(0)=\bA-\sum_{j=1}^p\frac{\bR^{(j)}}{\sigma_j},
\]
is the response of a static elastic network.

Finally for the roots $\lambda_{\pm}^{(j)}$ of $\sigma_j + (\alpha \sigma_j + \beta) \lambda + \lambda^2$ we have \[ 2 \Re \lambda_+^{(j)} = - (\alpha \sigma_j + \beta) \leq 0.\] Hence the resonances lie in the left half plane.
\end{proof}

\begin{remark}
 Setting $\lambda=0$ in Theorem~\ref{thm:rayleigh_char} gives the static or zero frequency result in \cite{Camar:2003:DCS}. Setting $\alpha=\beta=0$ (i.e., no damping) gives the result in \cite{Guevara:2011:CCS} for elastodynamic networks with no damping. 
\end{remark}

For the Laplace parameter $\lambda$ in the imaginary axis (i.e., real frequencies), the eigenvalues of the imaginary part of the response should have a sign consistent with the energy losses due to damping (again a manifestation of the second law of thermodynamics). These inequalities are essential to the single-frequency characterization and synthesis of \citet{Milton:2008:RRM}, and hence these inequalities should also hold in our case. The next proposition shows that these inequalities hold automatically for matrix functions of $\lambda$ satisfying the hypothesis of theorem~\ref{thm:rayleigh_char}.
\begin{proposition}\label{prop:sign}
Any matrix function of $\lambda$ of the form \eqref{eq:W_poles} and with the properties of theorem~\ref{thm:rayleigh_char} is such that
\[
 \omega \Im \bW(\imath\omega) \geq 0 ~~ \text{for any $\omega \in \real$},
\]
where the inequality $\bA \geq 0$ for a symmetric matrix $\bA$, means $\bA$ is positive semidefinite.
\end{proposition}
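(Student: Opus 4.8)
We first record the shape of $\Im\bW(\imath\omega)$. Taking imaginary parts at $\lambda=\imath\omega$ term by term in \eqref{eq:W_poles}, the affine part contributes $\Im[(1+\imath\alpha\omega)\bA+(\beta\imath\omega-\omega^2)\bM]=\alpha\omega\bA+\beta\omega\bM$, and for each resolvent term a short computation, using $q_j(\imath\omega)=(\sigma_j-\omega^2)+\imath\omega(\alpha\sigma_j+\beta)$, gives
\begin{equation*}
 \Im\frac{(1+\imath\alpha\omega)^2}{q_j(\imath\omega)}=\frac{-\,\omega\,g_j(\omega^2)}{|q_j(\imath\omega)|^2},\qquad g_j(v):=(\beta-\alpha\sigma_j)+\alpha\,(2-\alpha^2\sigma_j-\alpha\beta)\,v.
\end{equation*}
Collecting these and multiplying by $\omega$,
\begin{equation*}
 \omega\,\Im\bW(\imath\omega)=\alpha\omega^2\bA+\beta\omega^2\bM+\sum_{j=1}^p\frac{\omega^2\,g_j(\omega^2)}{|q_j(\imath\omega)|^2}\,\bR^{(j)}.
\end{equation*}
The scalar $g_j(\omega^2)$ can be negative (for instance whenever $\beta<\alpha\sigma_j$), so the right-hand side is not manifestly positive semidefinite and no term-by-term estimate works; the compensation must come from the $\alpha\omega^2\bA$ term together with the constraint on $\bW(0)$.

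The key step is to invoke property~(v) in the form $\bA=\bW(0)+\sum_{j=1}^p\sigma_j^{-1}\bR^{(j)}$, which turns the last identity into
\begin{equation*}
 \omega\,\Im\bW(\imath\omega)=\alpha\omega^2\,\bW(0)+\beta\omega^2\bM+\sum_{j=1}^p\frac{\omega^2}{\sigma_j\,|q_j(\imath\omega)|^2}\left(\alpha\,|q_j(\imath\omega)|^2+\sigma_j\,g_j(\omega^2)\right)\bR^{(j)},
\end{equation*}
and then to check the polynomial identity (writing $v=\omega^2$, and $|q_j(\imath\omega)|^2=(\sigma_j-v)^2+v(\alpha\sigma_j+\beta)^2$)
\begin{equation*}
 \alpha\,|q_j(\imath\omega)|^2+\sigma_j\,g_j(\omega^2)=\sigma_j\beta+\alpha\beta(\alpha\sigma_j+\beta)\,v+\alpha\,v^2.
\end{equation*}
Its right-hand side is a polynomial in $v\geq0$ with nonnegative coefficients, since $\alpha,\beta\geq0$ and $\sigma_j>0$, so every scalar factor in the preceding display is $\geq0$. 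As $\bW(0)\geq0$ (property~(v)), $\bM\geq0$ (property~(ii)) and $\bR^{(j)}\geq0$ (property~(i)), this proves $\omega\,\Im\bW(\imath\omega)\geq0$ for $\omega>0$. For $\omega<0$ the matrix coefficients of \eqref{eq:W_poles} being real gives $\bW(\overline\lambda)=\overline{\bW(\lambda)}$, hence $\Im\bW(-\imath\omega)=-\Im\bW(\imath\omega)$ and the inequality is unchanged; the case $\omega=0$ is trivial.

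The only real obstacle is recognizing that the naive sign analysis fails and that the cancellation making $\omega\,\Im\bW(\imath\omega)$ nonnegative is encoded precisely by the static requirement $\bW(0)\geq0$; once that is noticed, what remains are the two routine computations above. One may also note that if $(\alpha,\beta)\neq(0,0)$ then $|q_j(\imath\omega)|^2>0$ for every real $\omega$ (because $\sigma_j>0$), so $\bW$ has no poles on the imaginary axis and the statement is meaningful for all $\omega\in\real$, whereas if $\alpha=\beta=0$ then $\bW(\imath\omega)$ is real-valued wherever it is defined and the inequality holds trivially.
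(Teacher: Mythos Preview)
Your proof is correct. The overall strategy coincides with the paper's: both recognize that the naive term-by-term estimate fails and that the fix is to replace $\bA$ by $\bW(0)+\sum_j\sigma_j^{-1}\bR^{(j)}$ using property~(v), obtaining the same decomposition $\bW=\bW_1+\bW_2+\bW_3$ with $\bW_1(\lambda)=(1+\alpha\lambda)\bW(0)$, $\bW_2(\lambda)=(\beta\lambda+\lambda^2)\bM$, and $\bW_3$ a sum over $j$ of $f_j(\lambda)\sigma_j^{-1}\bR^{(j)}$ with $f_j(\lambda)=(1+\alpha\lambda)-\dfrac{(1+\alpha\lambda)^2\sigma_j}{q_j(\lambda)}$.

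The only genuine difference is how you establish $\omega\,\Im f_j(\imath\omega)\geq 0$. You compute this scalar explicitly and verify the polynomial identity $\alpha\,|q_j(\imath\omega)|^2+\sigma_j g_j(\omega^2)=\sigma_j\beta+\alpha\beta(\alpha\sigma_j+\beta)v+\alpha v^2$, whose coefficients are visibly nonnegative. The paper instead writes $f_j$ as the Schur complement of the $(2,2)$ entry in the $2\times 2$ complex symmetric matrix
\[
\bB(\lambda)=\begin{bmatrix}1+\alpha\lambda & 1+\alpha\lambda\\ 1+\alpha\lambda & \sigma_j^{-1}q_j(\lambda)\end{bmatrix},
\]
checks that $\Im\bB(\imath\omega)$ has the sign of $\omega$, and invokes the general fact that Schur complements inherit the sign of the imaginary part. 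Your direct verification is more elementary and self-contained; the paper's argument avoids the algebraic identity but relies on an appendix lemma about Schur complements of complex symmetric matrices. Either way the content is the same scalar inequality.

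One minor remark: your final expression for $\omega\,\Im\bW(\imath\omega)$ already depends only on $\omega^2$, so the reflection argument for $\omega<0$ is unnecessary (though not wrong).
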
 
\begin{proof}
Let us rewrite the matrix-valued function $\bW(\lambda)$ in \eqref{eq:W_poles} as
\[
 \bW(\lambda) = \bW_1(\lambda) + \bW_2(\lambda) + \bW_3(\lambda),
\]
where
\[
 \begin{aligned}
 \bW_1(\lambda) &= (1+\alpha\lambda) \left( \bA - \sum_{j=1}^p \frac{\bR^{(j)}}{\sigma_j} \right)
                 = (1+\alpha\lambda) \bW(0),\\
 \bW_2(\lambda) &= (\beta \lambda + \lambda^2)\bM, \text{~and~}\\
 \bW_3(\lambda) &= \sum_{j=1}^p \left[ (1+\alpha \lambda) - \frac{(1+\alpha \lambda)^2 \sigma_j}{\sigma_j + \lambda(\alpha \sigma_j + \beta) + \lambda^2 } \right] \frac{\bR^{(j)}}{\sigma_j}.
 \end{aligned}
\]
To prove the final result, it is enough to show that $\omega \Im \bW_k(\imath\omega) \geq 0$, $k=1,2,3$. The first two cases are clear since: $\omega \Im \bW_1(\imath \omega) = \alpha \omega^2 \bW(0) \geq 0$ because $\bW(0) \geq 0$; and $\omega \Im \bW_2(\imath \omega) = \beta \omega^2 \bM \geq 0$ because $\bM$ is diagonal with nonnegative entries.

Since $\bR^{(j)}/\sigma_j \geq 0$, we have proved the inequality for the third case if for all $\sigma>0$ the function
\[
 f(\lambda) = (1+\alpha \lambda) - \frac{(1+\alpha \lambda)^2 \sigma}{\sigma + \lambda(\alpha \sigma + \beta) + \lambda^2 }
\]
is such that 
\[
 \omega \Im f(\imath \omega) \geq 0, ~~ \text{for $\omega \in \real$.}
\]
To show this inequality, consider the $2\times 2$ complex symmetric matrix
\[
 \bB(\lambda) = 
 \begin{bmatrix}
  1+\alpha \lambda & 1+\alpha \lambda\\
  1+\alpha \lambda & \frac{\sigma + \lambda(\alpha \sigma + \beta) + \lambda^2}{\sigma}
 \end{bmatrix}.
\]
We start with $\omega \geq 0$. Clearly $\Im \bB(\imath\omega) \geq 0$ since
$\det(\Im \bB(\imath\omega)) = \omega \alpha  \beta/\sigma \geq 0$ and all the
entries of $\Im \bB(\imath\omega)$ are non-negative. Then noticing that
$f(\lambda)$ is the Schur complement of the $2,2$ entry in $\bB(\lambda)$ and
using the property of the Schur complement \eqref{eq:schursign}, we have that
$\Im \bB(\imath\omega) \geq 0 \Rightarrow \Im f(\imath \omega) \geq 0$. So we get the result $\omega f(\imath \omega) \geq 0$ when $\omega \geq 0$.

For $\omega \leq 0$, the same reasoning holds. We only need to check that $\Im \bB(\imath\omega) \leq 0$. This is true because all the entries of $\Im \bB(\imath\omega)$ are non-positive and $\det (\Im \bB(\imath\omega)) = \omega \alpha  \beta/\sigma \leq 0$. Therefore $\omega f(\imath \omega) \geq 0$ when $\omega \leq 0$, which completes the proof.
\end{proof}
%%%%%%%%%%%%%%%%%%%%%%%%%%%%%%%%%%%%%%%%%%%%%%%%%%%%%%%%%%%%%%%%%%%%%%%%
\section{Synthesis of general Rayleigh damped networks}
\label{sec:synth}

The basic building block for the synthesis is a Rayleigh damped network with: a rank one response; a complex conjugate pair or a real pair of prescribed resonances; and zero static response. The existence of such a network is stated in lemma~\ref{lem:rankone} and proved at the end of this section (it is a similar construction to that in \cite[Lemma 12]{Guevara:2011:CCS}). The synthesis of a general Rayleigh damped network is done in theorem~\ref{thm:synth}.

\begin{lemma} \label{lem:rankone}
Let $\bx_1,\ldots,\bx_n$ be some terminal nodes and let $\bbf_1, \ldots, \bbf_n$ be an arbitrary system of forces at the corresponding terminals (the forces do not need to be balanced). Then for any $\alpha,\beta \geq 0$ and $\sigma>0$, it is possible to build a Rayleigh damped network with zero static response and with
\begin{equation} \label{eq:rankone}
 \bW(\lambda) = \left[ (1+\alpha\lambda) - \frac{(1+\alpha\lambda)^2\sigma}{\sigma  + (\alpha \sigma + \beta) \lambda + \lambda^2} \right] \bbf \bbf^T,
\end{equation}
where $\bbf = [ \bbf_1^T, \ldots, \bbf_n^T]^T$. The internal nodes of such a network can be chosen to avoid a finite number of points and to be within an $\epsilon-$neighborhood of the convex hull of the terminal nodes.
\end{lemma}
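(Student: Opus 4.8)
The plan is to construct the network explicitly from a single spring together with one extra internal node carrying a mass, and then eliminate that node by taking a Schur complement. First I would reduce to the scalar rank-one situation: since the target response is $f(\lambda)\,\bbf\bbf^T$ with $f(\lambda) = (1+\alpha\lambda) - (1+\alpha\lambda)^2\sigma/q(\lambda)$ and $q(\lambda) = \sigma + (\alpha\sigma+\beta)\lambda + \lambda^2$, it suffices to realize a network whose response, when one interior node $y$ is kept, has the block form $(1+\alpha\lambda)\begin{bmatrix} s\,\bbf\bbf^T & -s\,\bbf \\ -s\,\bbf^T & s\end{bmatrix} + (\beta\lambda+\lambda^2)\begin{bmatrix} \bzero & \\ & m\end{bmatrix}$ for suitable $s>0$ (a spring constant) and $m>0$ (a mass), because the Schur complement of the interior $(y,y)$ block of that matrix is exactly $f(\lambda)\,\bbf\bbf^T$ once $s$ and $m$ are matched to $\sigma$. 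Concretely, taking the Schur complement gives $(1+\alpha\lambda)s\,\bbf\bbf^T - (1+\alpha\lambda)^2 s^2 \bbf\bbf^T /\big((1+\alpha\lambda)s + (\beta\lambda+\lambda^2)m\big)$, and choosing $m$ and $s$ so that $(1+\alpha\lambda)s + (\beta\lambda+\lambda^2)m = (s/\sigma)\,q(\lambda)$ — i.e. $m = s/\sigma$ and $s$ arbitrary positive — collapses this to $f(\lambda)\,\bbf\bbf^T$. So the algebra fixes the ratio $m/s = 1/\sigma$ and leaves one free scaling.

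Next I would realize the ``all-terminals'' block matrix above as an actual Rayleigh damped mass-spring network on the nodes $\bx_1,\ldots,\bx_n$ plus one auxiliary node $\by$. The key point is that $s\begin{bmatrix} \bbf\bbf^T & -\bbf \\ -\bbf^T & 1\end{bmatrix}$ is precisely the stiffness matrix of a ``pinned'' configuration in which the direction of force at terminal $\bx_i$ is $\bbf_i$; this is the same device as in \cite[Lemma 12]{Guevara:2011:CCS}: one builds a rank-one spring assembly whose stiffness matrix has this outer-product structure, connecting the new node $\by$ to the terminals by a gadget of springs whose geometry encodes the vectors $\bbf_i$. Because the damping is Rayleigh with the prescribed $\alpha$, putting a dashpot of constant $\alpha\times(\text{spring constant})$ in parallel with every spring produces the factor $(1+\alpha\lambda)$ multiplying the whole stiffness matrix, and placing a cavity of viscous fluid of constant $\beta m$ around the single mass $m$ at node $\by$ produces the $(\beta\lambda+\lambda^2)m$ term in the $(\by,\by)$ slot with zeros elsewhere (the terminals stay massless). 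Then Schur-complementing out $\by$ — equilibrating forces at $\by$, which is legitimate away from the roots of $q$ — yields exactly \eqref{eq:rankone}. The static response is $f(0)\bbf\bbf^T = (1 - \sigma/\sigma)\bbf\bbf^T = \bzero$, as required.

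Finally I would invoke the same geometric bookkeeping as in \cite{Guevara:2011:CCS}: the rank-one spring gadget can be drawn with all internal nodes (including $\by$) in an arbitrarily small neighborhood of the convex hull of $\bx_1,\ldots,\bx_n$, and a small perturbation of those internal node positions avoids any prescribed finite set of points while keeping the response unchanged — here one uses the remark that each spring/dashpot occupies an arbitrarily thin tube and each mass/cavity is arbitrarily small, so the response-preserving network transformations of \cite[\S 2.3]{Guevara:2011:CCS} still apply in the Rayleigh damped setting. The main obstacle, and the one deserving care, is the explicit construction of the rank-one gadget realizing $s\bbf\bbf^T$ for a completely arbitrary (unbalanced) force system $\bbf_1,\ldots,\bbf_n$ at terminals in fixed positions: one must show such a stiffness matrix is genuinely achievable by springs with prescribed geometry, and that the single added node with its mass suffices to produce the required resolvent denominator $q(\lambda)$ with the correct — possibly complex-conjugate — pair of roots. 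Once that construction is in hand (by reference to \cite[Lemma 12]{Guevara:2011:CCS}), the Schur complement computation and the localization argument are routine.
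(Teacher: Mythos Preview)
Your overall strategy---build a massless rank-one Rayleigh network, attach a mass to an internal node, then Schur-complement it out---matches the paper's, and the resolvent algebra is essentially right (one slip: with $m=s/\sigma$ the Schur complement is $s\,f(\lambda)\,\bbf\bbf^T$, so $s=1$ is forced, not free). But the step you yourself flag as ``the main obstacle'' has a genuine gap.

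You propose a \emph{single} auxiliary node $\mathbf{y}$ and target the block $s\begin{bmatrix}\bbf\bbf^T & -\bbf\\ -\bbf^T & 1\end{bmatrix}$. First, this is dimensionally off: $\mathbf{y}$ lives in $\real^d$, so the interior block must be $d\times d$, not scalar; a rank-one response with one extra node has the form $[\bbf^T,\bg^T]^T[\bbf^T,\bg^T]$ for some $\bg\in\real^d$. More seriously, for this matrix to be the response of a static spring network (so that theorem~\ref{thm:massless_Rayleigh} applies and the factor $(1+\alpha\lambda)$ is legitimate), every column must be a balanced system of forces at $\bx_1,\ldots,\bx_n,\mathbf{y}$. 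Force balance forces $\bg=-\sum_i\bbf_i$, and then torque balance reads $\sum_i(\bx_i-\mathbf{y})\times\bbf_i=\bzero$. This cannot always be solved for $\mathbf{y}$: if $\sum_i\bbf_i=\bzero$ but the net torque $\sum_i\bx_i\times\bbf_i\neq\bzero$, then $\bg=\bzero$ and no placement of $\mathbf{y}$ helps. One added node simply does not carry enough degrees of freedom to balance an arbitrary force system for $d\geq2$. This is exactly why the paper, following \cite[Lemma~12]{Guevara:2011:CCS}, introduces \emph{two} auxiliary nodes $\bx_{n+1},\bx_{n+2}$ with forces $\bg=[\bbf_{n+1}^T,\bbf_{n+2}^T]^T\in\real^{2d}$, giving $2d$ parameters---enough for the $d+\binom{d}{2}$ balance constraints. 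Both nodes receive the same mass $m$ (with its $\beta m$ cavity), the $2d\times2d$ interior block $(1+\alpha\lambda)\bg\bg^T+(\beta\lambda+\lambda^2)m\bI$ is Schur-complemented out, and the choice $m=|\bg|^2/\sigma$ yields \eqref{eq:rankone}. Your argument goes through once the single node $\mathbf{y}$ is replaced by this pair.
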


\begin{theorem} \label{thm:synth}
Given any matrix-valued function $\bW(\lambda)$ of the form \eqref{eq:W_poles} and satisfying the conditions of theorem~\ref{thm:rayleigh_char} with $\alpha$ and $\beta$ fixed, there exists a Rayleigh damped network with proportionality constants $\alpha$ and $\beta$ and response $\bW(\lambda)$. The internal nodes of such a network can be chosen to avoid a finite number of points and to be within an $\epsilon-$neighborhood of the convex hull of the terminal nodes.
\end{theorem}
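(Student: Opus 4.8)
The plan is to realize $\bW(\lambda)$ as a \emph{parallel combination} of networks that we already know how to build, exploiting that overlaying two networks which share only their terminal nodes $\bx_1,\ldots,\bx_n$ adds their response matrices. The starting point is the decomposition $\bW = \bW_1 + \bW_2 + \bW_3$ used in the proof of Proposition~\ref{prop:sign}, namely $\bW_1(\lambda) = (1+\alpha\lambda)\bW(0)$, $\bW_2(\lambda) = (\beta\lambda+\lambda^2)\bM$, and $\bW_3(\lambda) = \sum_{j=1}^p \left[(1+\alpha\lambda) - \frac{(1+\alpha\lambda)^2\sigma_j}{q_j(\lambda)}\right]\frac{\bR^{(j)}}{\sigma_j}$ with $q_j(\lambda) = \sigma_j + (\alpha\sigma_j+\beta)\lambda + \lambda^2$. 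I would build a Rayleigh damped network with the prescribed $\alpha,\beta$ realizing each $\bW_k$ and then superimpose the three.

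For $\bW_1$: condition~v of Theorem~\ref{thm:rayleigh_char} says that $\bW(0) = \bA - \sum_j \bR^{(j)}/\sigma_j$ is the response of a static elastic network at $\bx_1,\ldots,\bx_n$, so Theorem~\ref{thm:massless_Rayleigh} produces a massless Rayleigh damped network (proportionality constant $\alpha$; the value of $\beta$ is immaterial since there is no mass) whose response is $(1+\alpha\lambda)\bW(0) = \bW_1(\lambda)$, with internal nodes that may be chosen to avoid any prescribed finite set of points and to lie in an $\epsilon$-neighborhood of the convex hull. For $\bW_2$: since $\bM$ is diagonal with nonnegative nodal masses $m_1,\ldots,m_n$, I place at each terminal $\bx_i$ a mass $m_i$ sitting in a viscous cavity of damping coefficient $\beta m_i$ and attach no spring; by Newton's second law the block of $\lambda^2\bM + \lambda\bC$ corresponding to node $i$ is $(\lambda^2+\beta\lambda)m_i\bI$ and there is no inter-nodal coupling, so this network — which has no internal nodes — has response exactly $(\beta\lambda+\lambda^2)\bM = \bW_2(\lambda)$, and it is Rayleigh with constants $\alpha,\beta$ because its stiffness matrix vanishes.

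For $\bW_3$: each matrix $\bR^{(j)}/\sigma_j$ is real symmetric positive semidefinite, hence a finite sum $\sum_k \bbf^{(j,k)}(\bbf^{(j,k)})^T$ of rank-one terms (spectral decomposition, with eigenvalues absorbed into the eigenvectors). Applying Lemma~\ref{lem:rankone} with the prescribed $\alpha,\beta$ and with $\sigma = \sigma_j$ to each force vector $\bbf^{(j,k)}$ (which need not be balanced) yields a Rayleigh damped network with zero static response and response $\left[(1+\alpha\lambda) - \frac{(1+\alpha\lambda)^2\sigma_j}{q_j(\lambda)}\right]\bbf^{(j,k)}(\bbf^{(j,k)})^T$, exactly the rank-one form \eqref{eq:rankone}, whose internal nodes can again avoid a prescribed finite set and stay near the convex hull. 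Summing the responses of all these networks over $j$ and $k$ produces $\bW_3(\lambda)$; note that $\bW_2(0) = \bW_3(0) = 0$ while $\bW_1(0) = \bW(0)$, consistent with the static part of $\bW$.

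It remains to glue the pieces together. I would realize the sub-networks one at a time on the common terminal set $\bx_1,\ldots,\bx_n$, at each stage choosing the new internal nodes to avoid the (finitely many) internal nodes already in place and to remain in the $\epsilon$-neighborhood of the convex hull — possible precisely because Theorem~\ref{thm:massless_Rayleigh} and Lemma~\ref{lem:rankone} permit avoiding an arbitrary finite set of points. Ordering the nodes of the combined network as $(B, I_1, I_2, \ldots)$, the matrix $\bK + \lambda\bC + \lambda^2\bM$ has an interior--interior block that is block-diagonal across the sub-networks, so a short Schur-complement computation (Appendix~\ref{app:schur}, exactly as in the proof of Lemma~\ref{l:remove_massless}) shows the response of the combined network equals $\bW_1 + \bW_2 + \bW_3 = \bW(\lambda)$ wherever the relevant determinants are nonzero, hence for all $\lambda$ by equality of rational matrix functions. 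The Rayleigh structure with constants $\alpha,\beta$ is inherited since each block satisfies $\bC = \alpha\bK + \beta\bM$. The main obstacle is really this bookkeeping — the additivity of the response under the overlay and the simultaneous fulfilment of the internal-node constraints — together with the separate construction underlying Lemma~\ref{lem:rankone}; the hypotheses of Theorem~\ref{thm:rayleigh_char} were tailored to match exactly what these three building blocks produce, so I do not expect any hidden compatibility condition to surface.
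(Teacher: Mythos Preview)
Your proposal is correct and follows essentially the same approach as the paper: the identical decomposition $\bW=\bW_1+\bW_2+\bW_3$, realization of $\bW_1$ via Theorem~\ref{thm:massless_Rayleigh}, of $\bW_2$ by attaching masses with viscous cavities at the terminals, and of $\bW_3$ by spectrally splitting each $\bR^{(j)}/\sigma_j$ into rank-one pieces handled by Lemma~\ref{lem:rankone}, then superposing. The paper dispatches the gluing step by citing the superposition principle from \cite[\S 2.4]{Guevara:2011:CCS}, whereas you spell out the block-diagonal Schur-complement argument explicitly; otherwise the proofs coincide.
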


\begin{proof}
The proof relies on the superposition of networks, i.e., using the fact that the response of two networks that share terminal nodes and only terminal nodes, is the sum of the responses of each network (see e.g., \cite[\S 2.4]{Guevara:2011:CCS}). To specify the building blocks needed to realize the matrix-valued function $\bW(\lambda)$ in \eqref{eq:W_poles}, we first rewrite it as:
\[
 \bW(\lambda) = \bW_1(\lambda) + \bW_2(\lambda) + \bW_3(\lambda),
\]
where
\[
 \begin{aligned}
 \bW_1(\lambda) &= (1+\alpha\lambda) \left( \bA - \sum_{j=1}^p \frac{\bR^{(j)}}{\sigma_j} \right)
                 = (1+\alpha\lambda) \bW(0),\\
 \bW_2(\lambda) &= (\beta \lambda + \lambda^2)\bM, \text{~and~}\\
 \bW_3(\lambda) &= \sum_{j=1}^p \left[ (1+\alpha \lambda) - \frac{(1+\alpha \lambda)^2 \sigma_j}{\sigma_j + \lambda(\alpha \sigma_j + \beta) + \lambda^2 } \right] \frac{\bR^{(j)}}{\sigma_j}.
 \end{aligned}
\]
We now show how to realize each term $\bW_1$, $\bW_2$ and $\bW_3$ separately by a network. The superposition principle can then be used to find a network realizing $\bW$. 

We first use theorem~\ref{thm:massless_Rayleigh} to see that a network realizing $\bW_1(\lambda)$ is the network of springs realizing the static response $\bW(0)$ (existence guaranteed by theorem~\ref{spring_static} or \cite{Camar:2003:DCS}), with a damper with damping constant $\alpha$ times the spring constant added in parallel to each spring. To realize $\bW_2$ it suffices to endow each terminal node by the mass dictated by $\bM$ and surrounding it by a cavity such that the resulting damping is $\beta$ times the mass.

We now show that each term in the sum in the expression of $\bW_3$ is realizable. Then the realizability of $\bW_3$ follows from the superposition principle. Let us drop the index $j$ for the sake of simplicity and show that there is a Rayleigh damped network with response:
\[
 \left[ (1+\alpha \lambda) - \frac{(1+\alpha \lambda)^2 \sigma}{\sigma + \lambda(\alpha \sigma + \beta) + \lambda^2 } \right] \frac{\bR}{\sigma},
\]
when $\bR$ is real symmetric positive semidefinite. By the spectral theorem it is possible to find real vectors $\bv_k$ such that
\[
\frac{\bR}{\sigma} = \sum_{k=1}^r \bv_k \bv_k^T,
\]
where $r>0$ is the rank of $\bR$. (The case $r=0$ is the trivial $\bR=\bzero$ case).
Hence we have reduced the problem to that of finding a network with zero static response, a rank one response and as resonances the roots of $p(\lambda) = \sigma + \lambda(\alpha \sigma + \beta) + \lambda^2$. Lemma~\ref{lem:rankone} shows how to build such a network. This completes the construction.
\end{proof}

We now prove Lemma~\ref{lem:rankone}.

\begin{proof}
The main idea here is to find a massless Rayleigh damped network and add appropriate masses to the internal nodes to get the desired resonances. According to \cite[Lemma 12]{Guevara:2011:CCS}, it is possible to choose two distinct nodes $\bx_{n+1}$ and $\bx_{n+2}$ and two forces $\bbf_{n+1}$ and $\bbf_{n+2}$ such that the system of forces $\bbf_1,\ldots,\bbf_{n+2}$ is balanced when supported at the nodes $\bx_1,\ldots,\bx_{n+2}$, regardless of the choice of $\bbf$. Then by theorem~\ref{thm:massless_Rayleigh}, there exists a massless Rayleigh damped network with rank one response $(1+\alpha\lambda)[\bbf^T,\bg^T]^T [\bbf^T,\bg^T]$, where $\bg = [\bbf_{n+1}^T,\bbf_{n+2}^T]^T$. We now endow the two internal nodes $\bx_{n+1}$ and $\bx_{n+2}$ with the same mass $m$, that is determined later to match the desired resonances. We also surround the nodes $\bx_{n+1}$ and $\bx_{n+2}$ by a cavity with a viscous fluid, designed so that the damping term is $\beta m$. Then Newton's second law becomes
\[
 \left(\begin{bmatrix}\bbf \\ \bg\end{bmatrix} \begin{bmatrix}\bbf^T & \bg^T\end{bmatrix} (1+\alpha\lambda) + \begin{bmatrix} \bzero & \\ & m\bI\end{bmatrix}(\beta\lambda+\lambda^2)\right)\begin{bmatrix}\bu_B \\ \bu_I\end{bmatrix} = \begin{bmatrix}\bW(\lambda)\bu_B \\ \bzero\end{bmatrix},
\]
where $\bu_B$ and $\bu_I$ are the displacements of the boundary nodes $\bx_1,\ldots,\bx_n$ and interior nodes $\bx_{n+1}$ and $\bx_{n+2}$, respectively, and $\bW(\lambda)$ is the frequency response of this network.

Next, we take the Schur complement of the $II$ block to find the response $\bW(\lambda)$:
\[
 \begin{aligned}
  \bW(\lambda) &= (1+\alpha\lambda)\bbf\bbf^T+(1+\alpha\lambda)\left(\frac{-(1+\alpha\lambda)|\bg|^2}{(1+\alpha\lambda)|\bg|^2+(\beta\lambda+\lambda^2)m}\right)\bbf\bbf^T \\
  &= (1+\alpha\lambda)\bbf\bbf^T-\left(\frac{(1+\alpha\lambda)^2(|\bg|^2/m)}{\lambda^2+(\alpha(|\bg|^2/m)+\beta)\lambda+(|\bg|^2/m)}\right)\bbf\bbf^T.
 \end{aligned}
\]
Let $m = |\bg|^2/\sigma$. Then the response is
\[
 \bW(\lambda) = (1+\alpha\lambda)\bbf\bbf^T-\left(\frac{(1+\alpha\lambda)^2\sigma}{\lambda^2+(\alpha\sigma+\beta)\lambda+\sigma}\right)\bbf\bbf^T,
\]
which is the desired result since $\bW(0) = \bzero$.
\end{proof}

%%%%%%%%%%%%%%%%%%%%%%%%%%%%%%%%%%%%%%%%%%%%%%%%%%%%%%%%%%%%%%%%%%%%%%%%
\section{Resonances of Rayleigh damped networks}
\label{sec:resonances}
A natural question to ask is whether it is possible to find a Rayleigh network
with $\alpha$ and $\beta$ fixed with a resonance that is located anywhere in
the left half complex plane. Since the conditions of
theorem~\ref{thm:rayleigh_char} are necessary and sufficient for a matrix
valued function of frequency to be the response of a Rayleigh damped network,
this question is equivalent to finding the set of all possible resonances for
a Rayleigh damped network with $\alpha$ and $\beta$ fixed. Our analysis shows that
this is set is not the left hand plane. This means that the Rayleigh damping model is incomplete, since a general damped network can have resonances anywhere in the left half complex plane.

We state here the loci of the resonances that can be realized with Rayleigh
damped networks. The derivation of these loci is standard and is not included
here.  According to theorem~\ref{thm:rayleigh_char}, the resonances that can be
realized with Rayleigh damped networks are the roots of the quadratic
$\lambda^2 + (\alpha \sigma + \beta) \lambda + \sigma$, i.e.,
\[
 \lambda_{\pm} = \frac{ -(\alpha \sigma + \beta) \pm \sqrt{ (\alpha \sigma + \beta)^2 - 4 \sigma }}{2}.
\]
Because we only can choose $\sigma$, the resonances lie in curves in the complex plane. The specific curves depend on $\alpha$ and $\beta$ and are as follows.

%\begin{itemize}
 %\item 
 {\bf Damping at the nodes only ($\alpha=0$):} Union of the segment $\Im \lambda= 0,$ $-\beta \leq \Re \lambda < 0$ and the line $\Re \lambda = - \beta/2$.
 \begin{center}
 \begin{tikzpicture}[scale=0.7]
  \draw[->,thick] (-5,0) to (1,0);
  \draw[->,thick] (0,-2) to (0,2);
  \node[label=below:{Re$\,\lambda$}] at (1,0) {};
  \node[label=right:{Im$\,\lambda$}] at (0,2) {};
  \node at (-2,0) {\rotatebox{90}{$-$}};
  \node at (-4,0) {\rotatebox{90}{$-$}};
  \draw [red,very thick] (-4,0) to (0,0);
  \draw [red,very thick] (-2,-2) to (-2,2);
  \node [label=below:{\textcolor{red}{$-\beta$}}] at (-4,0) {};
  \node [label=above right:{\textcolor{red}{$-\beta/2$}}] at (-2,0) {};
  \draw [red,thick,fill=white] (0,0) circle [radius=0.1];
 \end{tikzpicture}
 \end{center}

 %\item 
 {\bf Damping with dashpots between the nodes only ($\beta=0$):} Union of the negative real axis $\Im \lambda = 0, \Re \lambda <0$ and the circle $|\lambda+\alpha^{-1}| =\alpha^{-1}$ (the origin excepted).
 \begin{center}
 \begin{tikzpicture}[scale=0.7]
  \draw[->,thick] (-5,0) to (1,0);
  \draw[->,thick] (0,-2) to (0,2);
  \node[label=below:{Re$\,\lambda$}] at (1,0) {};
  \node[label=right:{Im$\,\lambda$}] at (0,2) {};
  \node at (-2,0) {\rotatebox{90}{$-$}};
  \node at (-4,0) {\rotatebox{90}{$-$}};
  \draw [red,very thick,domain=0:360] plot ({-2+2*cos(\x)},{2*sin(\x)});
  \draw [red,very thick] (-5,0) to (0,0);
  \draw [black!50, thick,->] (-2,0) -- (-2-1.4142,1.4142) node [midway,above right]{\textcolor{red}{$\alpha^{-1}$}};
  \node [label=below:{\textcolor{red}{$-1/\alpha$}}] at (-2,0) {};
  \draw [red,thick,fill=white] (0,0) circle [radius=0.1];
 \end{tikzpicture}
 \end{center}

 %\item 
 {\bf Case $\alpha, \beta \neq 0$ and $\alpha \beta < 1$:} 
 Union of the negative real axis $\Im \lambda = 0,\Re \lambda <0$ and the circle $|\lambda + \alpha^{-1}| = \alpha^{-1}\sqrt{1-\alpha\beta}$.
 \begin{center}
 \begin{tikzpicture}[scale=0.7]
  \draw[->,thick] (-5,0) to (1,0);
  \draw[->,thick] (0,-2) to (0,2);
  \node[label=below:{Re$\,\lambda$}] at (1,0) {};
  \node[label=right:{Im$\,\lambda$}] at (0,2) {};
  \node at (-2,0) {\rotatebox{90}{$-$}};
  \draw [red,very thick] (-5,0) to (0,0);
  \draw [red,very thick,domain=0:360] plot ({-2+cos(\x)},{sin(\x)});
  \node [label=below:{\textcolor{red}{$-1/\alpha$}}] at (-2,0) {};
  \draw [black!50, thick,->] (-2,0) -- (-2-0.7071,.7071) node [pos=0.9,above left]{\textcolor{red}{$\frac{\sqrt{1-\alpha \beta}}{\alpha}$}};
  \draw [red,thick,fill=white] (0,0) circle [radius=0.1];
 \end{tikzpicture}
 \end{center}

 %\item 
 {\bf Case $\alpha, \beta \neq 0$ and $\alpha \beta \geq 1$:} the negative real axis $ \Im \lambda =0, \Re \lambda < 0$. 
 \begin{center}
 \begin{tikzpicture}[scale=0.7]
  \draw[->,thick] (-5,0) to (1,0);
  \draw[->,thick] (0,-2) to (0,2);
  \node[label=below:{Re$\,\lambda$}] at (1,0) {};
  \node[label=right:{Im$\,\lambda$}] at (0,2) {};
  \draw [red,very thick] (-5,0) to (0,0);
  \draw [red,thick,fill=white] (0,0) circle [radius=0.1];
 \end{tikzpicture}
 \end{center}

%\end{itemize}

%%%%%%%%%%%%%%%%%%%%%%%%%%%%%%%%%%%%%%%%%%%%%%%%%%%%%%%%%%%%%%%%%%%%%%%%
\section{Discussion and future work} \label{sec:conc} 

We have established the characterization and synthesis of the response for
mass-spring networks with Rayleigh damping. In particular, our result shows that,
for each pair of $\alpha$ and $\beta$, there is a class of resonances that can
be realized when the damping matrix is $\bC = \alpha \bK + \beta \bM$. Clearly,
from section~\ref{sec:resonances}, when choosing different values of $\alpha$ and
$\beta$, it is possible to have any resonance with negative real part. Hence, if
we superimpose Rayleigh damped networks with different values of $\alpha$ and
$\beta$, it is possible to construct a network with any finite number of
resonances in the left half complex plane (provided they are real or come in
complex conjugate pairs). However, it is not clear whether the response of a
general damped network can be realized by superposing several Rayleigh networks
with different $\alpha$ and $\beta$. This is a question that we
plan to explore.

Another extension of this work would be to consider general damping by using
the quadratic eigenvalue problem \cite{Tisseur:2001:QEP,Gohberg:1982:MP}.
Specifically, we believe it is possible to use the spectral decomposition for
real symmetric quadratic pencils by \citet{Chu:2009:SDR} to characterize the
response of general damped networks. Then for the synthesis, we would need to
construct networks that realize the form of the response, but this is left for
future work.

%%%%%%%%%%%%%%%%%%%%%%%%%%%%%%%%%%%%%%%%%%%%%%%%%%%%%%%%%%%%%%%%%%%%%%%%
\appendix

%%%%%%%%%%%%%%%%%%%%%%%%%%%%%%%%%%%%%%%%%%%%%%%%%%%%%%%%%%%%%%%%%%%%%%%%
\section{Schur complement properties}
\label{app:schur}
Consider the partition of a matrix $\bA \in \complex^{n \times n}$ induced by the partition of $\{1,\ldots,n\}$ into two sets $I$ and $B$:
\[
 \bA = \begin{bmatrix} \bA_{BB} & \bA_{BI}\\ \bA_{IB} & \bA_{II} \end{bmatrix}.
\]
The Schur complement of the $II$ block in $\bA$ is defined as
\[
  \bS = \bA_{BB} - \bA_{BI} \bA_{II}^{-1} \bA_{IB},
\]
provided $\bA_{II}$ is invertible. The Schur complement is homogeneous of degree 1, since for nonzero $\lambda \in \complex$,
\[
 \lambda \bS = (\lambda \bA_{BB}) - (\lambda \bA_{BI}) (\lambda \bA_{II})^{-1} (\lambda \bA_{IB}).
\]

A quadratic form of the Schur complement is
equivalent to a quadratic form of the original matrix $\bA$, indeed by simple
manipulations we have 
\begin{equation}
 \bv_B^* \bS \bv_B = 
 \begin{bmatrix} \bv_B\\\bv_I \end{bmatrix}^*
 \bA
 \begin{bmatrix} \bv_B\\\bv_I \end{bmatrix},
 \text{~where~}
 \bv_I = - \bA_{II}^{-1} \bA_{IB} \bv_B.
 \label{eq:quadeq}
\end{equation}
A consequence of \eqref{eq:quadeq} is that if $\bA \in \real^{n\times n}$, $\bA \geq 0$ implies $\bS \geq 0$ (and similarly for the reverse inequality). Here the inequality $\bA \geq 0$ is understood as $\bA$ positive semi-definite.

The quadratic form $\bv^* \bA \bv$ for $\bA\in\complex^{n\times n}$ with $\bA^T = \bA$ (i.e., complex symmetric), can be written as
\[
 \begin{aligned}
 \Re (\bv^* \bA \bv) &= (\Re \bv)^T (\Re \bA) (\Re \bv) + (\Im \bv)^T (\Re \bA) (\Im \bv),\\
 \Im (\bv^* \bA \bv) &= (\Re \bv)^T (\Im \bA) (\Re \bv) + (\Im \bv)^T (\Im \bA) (\Im \bv).
 \end{aligned}
\]
By combining this fact with \eqref{eq:quadeq} we have that for complex symmetric $\bA$ and its Schur complement $\bS$:
\begin{equation} \label{eq:schursign}
 \Re \bA \geq 0 \Rightarrow \Re \bS \geq 0 \text{~and~} 
 \Im \bA \geq 0 \Rightarrow \Im \bS \geq 0,
\end{equation}
and similarly for the reverse inequalities.

%%%%%%%%%%%%%%%%%%%%%%%%%%%%%%%%%%%%%%%%%%%%%%%%%%%%%%%%%%%%%%%%%%%%%%%%
\section*{Acknowledgments}
The authors are thankful to Graeme W.\ Milton, Daniel Onofrei and Pierre
Seppecher for insightful conversations on this subject. The work of AG was
supported through the University of Utah Mathematics Department VIGRE-REU
program. The work of FGV was partially supported by the National Science
Foundation DMS-0934664.

\bibliographystyle{abbrvnat}
\bibliography{propbib}

\end{document}